\newcommand{\cor}[1]{\textcolor{red}{#1}}
\theoremstyle{definition}
\newtheorem{proposition}{Proposition}
\begin{document}
\preprint{APS/123-QED}
\title{Quantum Limits of Superresolution in a Noisy environment}
%\thanks{A footnote to the article title}%
\author{Changhun Oh}%
\email{changhun@uchicago.edu}
\affiliation{Pritzker School of Molecular Engineering, The University of Chicago, Chicago, Illinois 60637, USA}
\author{Sisi Zhou}
\affiliation{Department of Physics, Yale University, New Haven, Connecticut 06511, USA}
\affiliation{Pritzker School of Molecular Engineering, The University of Chicago, Chicago, Illinois 60637, USA}
\author{Yat Wong}
\affiliation{Pritzker School of Molecular Engineering, The University of Chicago, Chicago, Illinois 60637, USA}
\author{Liang Jiang}%
\email{liang.jiang@uchicago.edu}
\affiliation{Pritzker School of Molecular Engineering, The University of Chicago, Chicago, Illinois 60637, USA}

\date{\today}% It is always \today, today,
% but any date may be explicitly specified
\begin{abstract}
We analyze the ultimate quantum limit of resolving two identical sources in a noisy environment.
We prove that in the presence of noise causing false excitation, such as thermal noise, the quantum Fisher information of arbitrary quantum states for the separation of the objects, which quantifies the resolution, always converges to zero as the separation goes to zero.
Noisy cases contrast with a noiseless case where it has been shown to be nonzero for a small distance in various circumstances, revealing the superresolution.
In addition, we show that false excitation on an arbitrary measurement, such as dark counts, also makes the classical Fisher information of the measurement approach to zero as the separation goes to zero.
Finally, a practically relevant situation resolving two identical thermal sources, is quantitatively investigated by using the quantum and classical Fisher information of finite spatial mode multiplexing, showing that the amount of noise poses a limit on the resolution in a noisy system.
\end{abstract}

\maketitle

%\emph{Introduction.---}
The Rayleigh criterion poses a limit of resolution of two incoherent objects in classical optics \cite{rayleigh1879xxxi, born2013principles}.
Recently, inspired by quantum optics and quantum metrology, superresolution overcoming the Rayleigh limit has been proposed by replacing a conventional direct imaging technique with structured measurement techniques in a weak source regime \cite{tsang2016quantum}.
Since the breakthrough, the superresolution technique has been generalized for two incoherent thermal sources \cite{nair2016far}, arbitrary quantum states of two objects \cite{lupo2016ultimate}, two-dimensional imaging \cite{ang2017quantum}, three-dimensional imaging \cite{yu2018quantum, napoli2019towards}, estimating spatial deformation \cite{sidhu2017quantum}, and an arbitrary number of sources \cite{tsang2017subdiffraction, zhou2019modern, tsang2019quantum, lupo2020quantum}, and it has been also studied from the perspective of channel discrimination \cite{lu2018quantum, pirandola2019fundamental}.
Also, many proof-of-principle experiments have demonstrated that elaborately constructed measurements enable surpassing the Rayleigh limit in practice \cite{paur2016achieving, tang2016fault, yang2016far, tham2017beating, parniak2018beating}.
The main idea of revealing superresolution is to show that the quantum Fisher information (QFI) of two objects' separation, the inverse of which limits the estimation error of the separation, is still nonzero when the separation converges to zero.
This behavior contrasts with a conventional direct imaging method whose classical Fisher information (CFI) vanishes as the separation drops to zero, making the estimation error of the separation diverge for a small separation.

More recently, the effects of noise on superresolution techniques started to be analyzed.
The CFI of two point sources' separation using a spatial mode demultiplexing scheme has been shown to vanish in the presence of dark counts \cite{len2020resolution} or measurement crosstalk \cite{gessner2020superresolution} when the separation is small.
However, these analyses are restricted to specific measurement schemes.
Meanwhile, the QFI of resolving two incoherent thermal point sources also vanishes for small separations in the presence of thermal background noise \cite{lupo2020subwavelength}.
In this case, the influence of detection noise has not been systematically analyzed.
Besides, previous studies are limited to uncorrelated classical states such as thermal states and weak point sources.
More general quantum states need to be analyzed for applications on microscopy where we can manipulate quantum states of light emitted from sources to improve resolution.

In this Letter, we consider a more general situation of resolving two identical sources in arbitrary quantum states, assuming a generic noise model inevitable in experiments, which we call excitation noise.
Excitation noise is a type of noise causing false excitation that cannot be distinguished from signal photons, which includes thermal background noise and dark counts.
We first prove that excitation noise makes QFI vanish for small separations,
which indicates that the resolution of two close sources is inherently vulnerable to noise in practical imaging processes.
We also provide a quantitative analysis of noise in resolving two identical incoherent thermal sources.
We then show that the CFI of arbitrary measurement affected by excitation noise on detectors vanishes for small separations.
Notably, our results reproduce previous studies about the impact of noises on particular imaging processes and states \cite{len2020resolution, lupo2020subwavelength, gessner2020superresolution}.
Finally, we show that in the presence of thermal noise, a finite spatial mode demultiplexing (fin-SPADE) measurement is nearly optimal when the signal-to-noise ratio (SNR) is large.

\emph{The model.---}
Consider two identical sources with a separation $s>0$ that emit light described by two orthogonal creation operators $\hat{c}_{1,2}^\dagger$.
The emitted light reaches the image plane being attenuated such that $\hat{c}_{1,2}^\dagger\to\sqrt{\eta}\hat{a}_{1,2}^\dagger-\sqrt{1-\eta}\hat{u}_{1,2}^\dagger$ with environmental modes $\hat{u}_{1,2}^\dagger$ and being distorted as
\begin{align}
\hat{a}^\dagger_1\equiv\int_{-\infty}^\infty dx \psi(x-s/2)\hat{a}_x^\dagger, ~~ \hat{a}^\dagger_2\equiv\int_{-\infty}^\infty dx \psi(x+s/2)\hat{a}_x^\dagger.
\end{align}
Here, $\psi(x)$ represents the point-spread function (PSF) of the imaging system, assumed to be real for simplicity.
Also, the mode operators for different positions satisfy the canonical commutation relation (CCR) $[\hat{a}_x,\hat{a}^\dagger_{x'}]=\delta(x-x')$.
In general, the two mode operators do not obey the CCR since the two PSFs $\psi(x\pm s/2)$ have a nonzero overlap, i.e., $[\hat{a}_1,\hat{a}_2^\dagger]\neq 0$.
Thus, we define symmetric and antisymmetric modes $\hat{a}_\pm$ to orthogonalize them \cite{tsang2016quantum, nair2016far, lupo2016ultimate, yu2018quantum} as
\begin{align}
\hat{a}_\pm\equiv \frac{\hat{a}_1\pm\hat{a}_2}{\sqrt{2(1\pm\delta)}},~~~ \delta(s)\equiv \int_{-\infty}^\infty dx\psi(x+s/2)\psi(x-s/2),
\end{align}
which satisfy the CCR, i.e., $[\hat{a}_+,\hat{a}_-^\dagger]=0$.
Now, the overall dynamics can be captured as
\begin{align}\label{c_op}
%\hat{a}_\pm^\dagger=\sqrt{\eta_\pm}\hat{c}_\pm^\dagger+\sqrt{1-\eta_\pm}v_\pm^\dagger, ~~~ \hat{c}_\pm^\dagger\equiv \frac{\hat{c}_1^\dagger\pm\hat{c}_2^\dagger}{\sqrt{2}},
\hat{c}_\pm^\dagger\equiv \frac{\hat{c}_1^\dagger\pm\hat{c}_2^\dagger}{\sqrt{2}}\to\sqrt{\eta_\pm}\hat{a}^\dagger_\pm-\sqrt{1-\eta_\pm}\hat{u}_\pm^\dagger,
\end{align}
where $\eta_\pm\equiv(1\pm\delta)\eta$ represent effective attenuation rates, and $\hat{u}_\pm$ represent auxiliary modes.
Furthermore, the imaging process of estimating the separation $s$ can be described by the following dynamics of the mode operators \cite{lupo2016ultimate} (See Appendix \ref{appendix:hamiltonian}),
\begin{align}\label{dynamics}
\frac{d\hat{a}_\pm}{ds}=i[\hat{H}_\pm^\text{eff},\hat{a}_\pm],
\end{align}
where the effective Hamiltonians are written as
\begin{align}\label{ham}
\hat{H}_\pm^\text{eff}=i\frac{d\theta_\pm}{ds}(\hat{c}_\pm^\dagger \hat{v}_\pm-\hat{c}_\pm\hat{v}_\pm^\dagger)-iB_\pm(\hat{a}_\pm\hat{b}^\dagger_\pm-\hat{a}_\pm^\dagger \hat{b}_\pm),
\end{align}
where $\hat{v}_\pm$ are the environmental mode operators before the transformation, $\theta_\pm\equiv \arccos\sqrt{\eta_\pm}$,
\begin{align}
\hat{b}_\pm&\equiv \frac{1}{B_\pm}\frac{\partial\hat{a}_\pm}{\partial s},~~~\text{and}~~~ B_\pm\equiv -\frac{\epsilon_\pm}{2\sqrt{1\pm\delta}}.% ~\text{and} ~~ \cor{\omega_\pm\equiv \sqrt{\left(\frac{d\theta_\pm}{ds}\right)^2+\frac{\epsilon_\pm^2}{4(1\pm\delta)}}},
\end{align}
Thus, mode operators $\hat{b}_\pm$ represent the derivative of the spatial modes, $\hat{a}_\pm(s+ds)\approx \hat{a}_\pm(s)+\partial_s \hat{a}_\pm(s)ds$.
We have also defined the following parameters:
\begin{align}
\epsilon_\pm^2&\equiv \Delta k^2\mp\beta-\frac{\gamma^2}{1\pm\delta}, ~~\gamma\equiv \delta'(s), ~~ \Delta k^2\equiv \beta(0), \\
\beta(s)&\equiv-\delta''(s)=\int_{-\infty}^\infty dx\frac{d\psi(x+s/2)}{dx}\frac{d\psi(x-s/2)}{dx}.
\end{align}
Here, $\gamma$ represents the variation of the overlap from the changes of the separation $s$, $\Delta k^2$ accounts for the variance of the momentum operator $-i\partial_x$, and $\beta$ represents interference between the derivatives of the PSFs.
The effective Hamiltonians show that when the separation $s$ changes, the attenuation to the environment $\hat{u}_\pm$ varies and the derivative modes $\hat{b}_\pm$ are excited through the beam-splitter-like Hamiltonian, which is the last term in Eq.~\eqref{ham}.
Note that the model assumes that the light evolves under a passive transformation before reaching the image plane and that since we use the Heisenberg picture, the emitted light from sources can be an arbitrary quantum state.

\emph{QFI in a noisy system.---}
From the perspective of quantum metrology, resolution can be quantified by the QFI of the separation $s$ \cite{tsang2016quantum}.
QFI $H(\theta)$ of a quantum state $\hat{\rho}(\theta)$ for an unknown parameter $\theta$ gives a lower bound of the estimation error for $\theta$, $\Delta^2\theta\geq 1/MH(\theta)$, which is the so-called quantum Cram\'{e}r-Rao inequality \cite{helstrom1976quantum, holevo2011probabilistic, braunstein1994statistical, paris2009quantum}. Here, $M$ is the number of independent trials.
Note that the quantum Cram\'{e}r-Rao inequality indicates that the estimation error diverges if the QFI vanishes.

Before we present our main result, we define excitation noise as a type of noise that transforms any quantum state to be a  full rank state.
The physical interpretation of the noise is that it introduces false excitation indistinguishable from the signal.
Thermal background noise is such noise, which is described by a beam-splitter interaction with an environmental mode in a thermal state of a nonzero photon number \cite{walls2007quantum},
because thermal background noise transforms a state into a full rank state.

Now, we present our main result:
\begin{proposition}
For imaging processes in the presence of excitation noise, the QFI for the separation $s$ of two identical sources in arbitrary quantum states converges to zero as $s\to0$.
\end{proposition}
\begin{proof}
Let $\hat{\rho}(s)$ be an arbitrary quantum state of light at the image plane, emitted by two identical sources separated by $s$.
First, because the two objects are identical, replacing $s$ by $-s$ does not change the description of the system.
Thus, we have $d\hat{\rho}/ds\propto s \hat{\sigma}$ with a Hermitian operator $\hat{\sigma}$ for small $s\ll 1$, which is explicitly shown in Appendix \ref{appendix:hamiltonian}.
%after some algebra using the effective Hamiltonians \eqref{ham}, we show that $d\hat{\rho}/ds\propto s \hat{\sigma}$ for small $s\ll 1$ with some Hermitian operator $\hat{\sigma}$ if the quantum state of light emitted by sources satisfies $\text{Tr}[\hat{\rho}_{c_+c_-}\hat{c}_-]=0$ \cite{supple}.
%The latter is satisfied because $\hat{c}_-$ is an antisymmetric operator between $\hat{c}_1$ and $\hat{c}_2$ for identical objects.
Meanwhile, since noise may occur in any relevant modes in the system, the quantum state $\hat{\rho}(s)$ is full rank after undergoing excitation noise.

Recall that QFI is written as $H(s)=\text{Tr}[\hat{\rho}(s)\hat{L}(s)^2]$, where $\hat{L}$ is a symmetric logarithmic derivative operator satisfying the equation $\partial_s\hat{\rho}(s)=[\hat{\rho}(s)\hat{L}(s)+\hat{L}(s)\hat{\rho}(s)]/2$ \cite{helstrom1976quantum, holevo2011probabilistic, braunstein1994statistical, paris2009quantum}.
Writing the quantum state in a spectral decomposition form $\hat{\rho}(s)=\sum_i p_i|\psi_i\rangle\langle\psi_i|$ and using $d\hat{\rho}/ds\propto s \hat{\sigma}$, the symmetric logarithmic derivative operator can be written as \cite{paris2009quantum}
\begin{align}
\hat{L}(s)&=2\sum_{i,j:p_i+p_j>0}\frac{\langle \psi_i|\partial_s\hat{\rho}(s)|\psi_j\rangle }{p_i+p_j}|\psi_i\rangle\langle\psi_j| \nonumber \\ 
&\approx 2 s\sum_{i,j:p_i+p_j>0}\frac{\langle \psi_i|\hat{\sigma}|\psi_j\rangle }{p_i+p_j}|\psi_i\rangle\langle\psi_j|+O(s^2).
\end{align}
By the definition of excitation noise, $p_i+p_j>0$ for all $i,j$, and $p_i+p_j$ does not converge to zero as $s\to 0$; hence, $H(s)=\text{Tr}[\hat{\rho}\hat{L}^2]\propto s^2\to0$ as $s\to0$ \cite{finite}.
(A similar argument has been used in the context of quantum spectroscopy \cite{gefen2019overcoming}.)
%Now, we show that when $d\hat{\rho}/ds\propto is \hat{\sigma}$ and \cor{[condition]} around $s=0$, the QFI converges to zero as $s\to0$.
%Since the quantum state is full-rank around $s=0$, $p_i+p_j>0$ does not converge to 0 as $s\to0$; hence, $H(s)=\text{Tr}[\hat{\rho}\hat{L}^2]\propto s^2\to0$ as $s\to0$.
\end{proof}
Note that although we assumed excitation noise for simplicity, it is sufficient for a final state to be full rank only in the subspace of signal operator $\hat{\sigma}$ to prove the same result.
The proposition can be intuitively explained by noting that the signal in the imaging system approaches zero for $s\to0$, indicated by $d\hat{\rho}/ds\propto s \hat{\sigma}$, while the noise ratio remains finite. Therefore, the SNR vanishes for small $s$, which leads to vanishing QFI.
In contrast, when the quantum state is not full rank in the support of $\hat{\sigma}$ around $s=0$, there exists $p_i>0$ and $|\psi_i\rangle$ such that $p_i\to0$ as $s\to0$ and a projection measurement onto $|\psi_i\rangle\langle\psi_i|$ gives a nonzero QFI element.
Therefore, the QFI may not vanish for a small $s$, which accounts for nonzero QFI for noiseless cases \cite{tsang2016quantum, nair2016far, lupo2016ultimate}.

Note that attenuation channels, where a vacuum state occupies the environmental mode $\hat{e}$, do not introduce false excitation but diminish the signal.
Thus, the QFI of $s$ does not necessarily vanish as $s\to0$.
We emphasize that the proposition does not rule out the possibility of superresolution overcoming the Rayleigh limit
but implies that when the objects are close and the system is noisy, the QFI of the separation can be extremely small.
We supply an important example to analyze the effect of noise in the following section.

\emph{Two identical thermal sources.---} 
Consider two incoherent thermal sources with an unknown separation $s$.
When the modes $\hat{a}_1$, $\hat{a}_2$ are occupied by thermal states of the mean photon number $N_s$, the symmetric and antisymmetric modes $\hat{a}_+$ and $\hat{a}_-$ can also be described by thermal states of the mean photon number $\eta N_s(1+\delta)$ and $\eta N_s(1-\delta)$, respectively \cite{lupo2016ultimate, nair2016far}.
Introducing thermal noise characterized by the same mean photon number $N_n$ onto the relevant modes $\hat{a}_\pm$ and $\hat{b}_\pm$, the quantum state is written as the product of the states of symmetric and antisymmetric modes, $\hat{\rho}=\hat{\rho}_+\otimes\hat{\rho}_-$, where
\begin{align}\label{state}
\hat{\rho}_\pm(s)=\hat{\rho}_\text{T}(\eta N_s(1\pm\delta(s))+N_n)\otimes \hat{\rho}_\text{T}(N_n).
%\\
%\hat{\rho}_\pm(s+ds)&=\hat{U}_\pm(s,s+ds)\hat{\rho}_\pm(s)\hat{U}_\pm^\dagger(s,s+ds)
\end{align}
Here, each mode corresponds to $\hat{a}_\pm,\hat{b}_\pm$, respectively, and $\hat{\rho}_\text{T}(N)$ represents a thermal state with the mean photon number $N$.
%It is worth emphasizing that when $N_n>0$, the quantum state is full-rank as $s\to0$.
%On the other hand, when $N_n=0$, the rank of the state on the antisymmetric mode $\hat{a}_-$ changes as $s\to0$  because $\delta\to1$.
%This observation with the proposition implies that when $N_n>0$, the quantum Fisher information vanishes as $s\to0$ while it may not be the case when $N_n=0$.

\begin{figure}[t]
\centering
\includegraphics[width=0.49\textwidth]{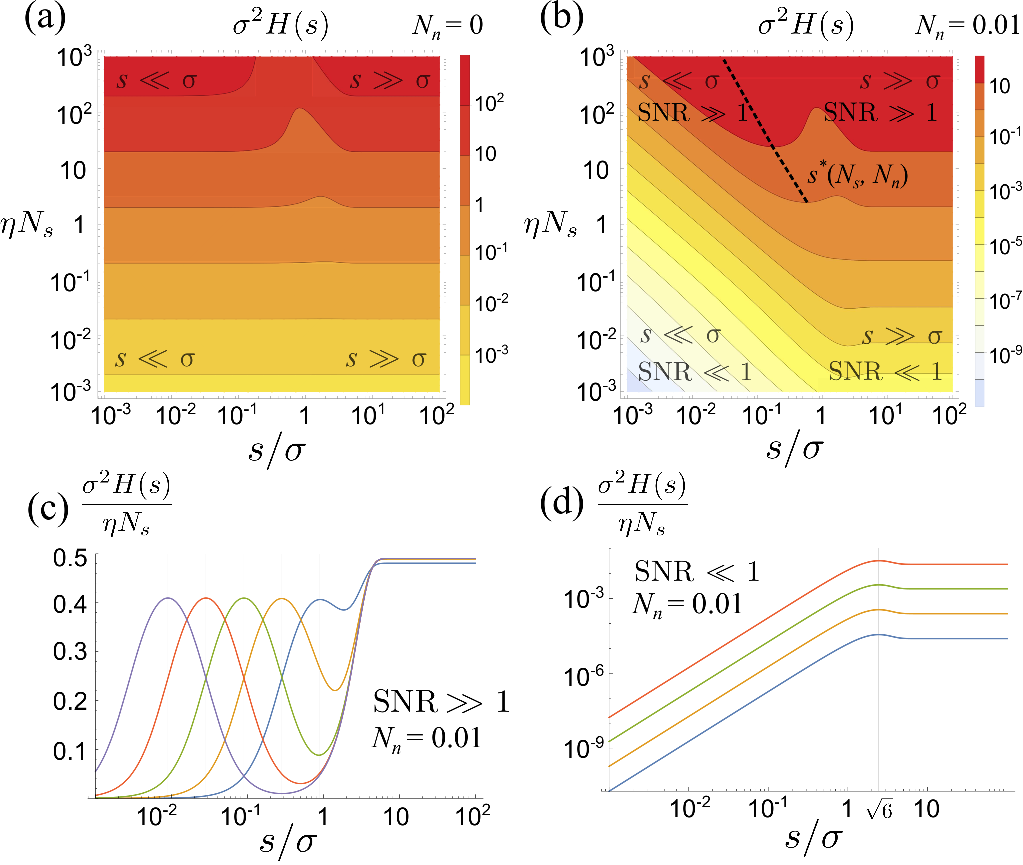}
\caption{QFI with respect to $s$ and $N_s$ with (a) $N_n=0$ (noiseless) and (b) $N_n=0.01$. In the noiseless case, the quantum Fisher information does not decrease as $s$ decreases. However, even with a small amount of noise photons, the QFI drops for small $s$. The dotted line in (b) shows local maxima of QFI for fixed $\eta N_s$, $N_n>0$, and $\text{SNR}\gg 1$ as shown in (c). (c) Normalized QFI when $\text{SNR}\gg 1$ with respect to $s$ with $\eta N_s=10^4, 10^3, 10^2, 10, 1$ from the left to the right and $N_n=0.01$. The horizontal line represents $H(s^*)$ and the vertical lines $s^*$ (see the main text). It captures the nonmonotonic behavior of QFI. (d) Normalized QFI when $\text{SNR}\ll 1$ with $\eta N_s=10^{-4}, 10^{-3}, 10^{-2}, 10^{-1}$  from the bottom.}
\label{qfi_thermal}
\end{figure}

Using the QFI formula of Gaussian states \cite{pinel2013quantum, jiang2014quantum, vsafranek2015quantum, vsafranek2016optimal, nichols2018multiparameter, vsafranek2018estimation, oh2019optimal}, we obtained the QFI of the separation $s$ (See Appendix \ref{appendix:QFI}), $H(s)=H_+(s)+H_-(s)$ with
\begin{align}\label{qfi}
H_\pm(s)&=\frac{\eta^2N_s^2\gamma^2}{(\eta N_s(1\pm\delta)+N_n+1)(\eta N_s(1\pm\delta)+N_n)} \nonumber \\ 
&-\frac{2\eta^2N_s^2[(1\pm\delta)(\delta''(0)\mp\delta''(s))+\gamma^2]}{(2N_n+1)(2\eta N_s(1\pm\delta)+2N_n+1)-1}.
\end{align}
Here, $H_\pm(s)$ represent the QFI from symmetric and antisymmetric modes, respectively.
The first and second term accounts for the changes of the mean photon number on mode $\hat{a}_\pm$ from the change of effective attenuation factors $\eta_\pm$ and the transformation of the spatial modes' shape $\hat{a}_\pm(s)$ into $\hat{a}_\pm(s+ds)\approx \hat{a}_\pm(s)+ds \partial_s \hat{a}_\pm$, respectively.

The QFI recovers previous results when $N_n=0$ in Refs. \cite{lupo2016ultimate, nair2016far}.
More importantly, the QFI vanishes as $s\to0$ unless $N_n=0$.
Figure~\ref{qfi_thermal} (a) and (b) compare the QFI $H(s)$ in the ideal and noisy cases with the Gaussian PSF, $\psi(x)=e^{-x^2/4\sigma^2}/(2\pi\sigma^2)^{1/4}$.
A remarkable difference between the two cases is that as $s\to0$, the QFI in the noisy case rapidly drops whereas it does not change in the ideal case.
For example, when the separation $s$ is $0.01\sigma$ and the mean signal photons $\eta N_s$ is $1$, the QFI $H(s)$ is $0.5/\sigma^2$ and $6\times 10^{-4}/\sigma^2$ for the noiseless case and the noisy case with $N_n=0.01$, respectively, which clearly shows that even a small amount of noise can be critical to the resolution.

Let us consider the regime where the SNR is large, $\text{SNR}\equiv \eta N_s/N_n\gg 1$.
In this regime, Fig. \ref{qfi_thermal} (c) shows another interesting feature of QFI; it is not monotonic with respect to $s$.
For a small separation $s \ll \sigma$ in the regime, the QFI for the Gaussian PSF can be approximated by
\begin{align}
H(s)\approx \frac{4\eta^2 N_s^2 s^2}{\eta^2 N_s^2 s^4+8\eta N_s s^2 \sigma^2+64N_n(N_n+1)\sigma^4},
\end{align}
which has the local maximum 
\begin{align}
H(s^*)&~~\approx \frac{\eta N_s}{2\sigma^2}\frac{\sqrt{N_n^2+N_n}}{(N_n+\sqrt{N_n^2+N_n})(\sqrt{N_n^2+N_n}+N_n+1)} \nonumber \\ 
&\stackrel{N_n\ll 1}{\approx} \frac{\eta N_s}{2\sigma^2}\frac{1}{1+2\sqrt{N_n}}
\end{align}
at $s^*=2\sqrt{2}(N_n^2+N_n)^{1/4}\sigma/\sqrt{\eta N_s}$, as shown in Fig. \ref{qfi_thermal} (c).
Here $s^*$ is a characteristic length scale, and if $s\ll s^*$, the QFI can be further approximated as
\begin{align}
H(s)&\approx\frac{\eta^2 N_s^2}{N_n(N_n+1)}\Delta k^4 s^2= \frac{\eta^2 N_s^2}{N_n(N_n+1)}\frac{s^2}{16 \sigma^4}  \\ 
&\stackrel{N_n\ll 1}{\approx}\frac{\eta^2 N_s^2}{N_n}\frac{s^2}{16 \sigma^4}  ~~~~\text{if}~~ \eta N_s\gg N_n ~\text{and}~ s \ll s^*. \nonumber
\end{align}
One can observe that when $\text{SNR}\gg 1$, $N_n\ll 1$ and $s\ll s^*$, the QFI per a signal photon is proportional to the SNR $H(s)/\eta N_s\propto \eta N_s/N_n$, which is consistent with the previous results \cite{len2020resolution, lupo2020subwavelength}.
Also, the QFI decreases quadratically as $s\to 0$.

On the other hand, when the SNR is small, i.e., $\eta N_s/N_n \ll 1$, and the separation is small, $s\ll \sqrt{6}\sigma$,
the QFI is approximated by
\begin{align}
H(s)&\approx \frac{\eta^2 N_s^2}{2N_n(N_n+1)}[3\Delta k^4+\delta^{(4)}(0)]s^2 \\ 
&= \frac{\eta^2 N_s^2}{N_n(N_n+1)}\frac{3s^2}{16} ~~~\text{if}~ \eta N_s\ll N_n \text{ and } s\ll\sqrt{6} \sigma, \nonumber
\end{align}
which is shown in Fig. \ref{qfi_thermal} (d).
Again, when $N_n\ll 1$, the QFI per a signal photon is proportional to the SNR, $H(s)/\eta N_s\propto \eta N_s/N_n,$ and decreases quadratically as $s\to 0$.

Finally, for a large separation $s\gg \sigma$, the QFI can be approximated as $H(s)\approx 2\eta^2 N_s^2\Delta k^2/[2N_n^2+\eta N_s+2N_n(\eta N_s+1)]$, which shows that the noise decreases the QFI for a large separation as well.

As a remark, we compare the QFI in Eq.~\eqref{qfi} with the one obtained in Ref.~\cite{lupo2020subwavelength} where the same type of noise was studied in the imaging process.
The discrepancy of the expression is present because the noise model of Ref.~\cite{lupo2020subwavelength} assumes that noise occurs only on the modes $\hat{a}_\pm$ whereas our noise model assumes the same amount of noise on $\hat{b}_\pm$ modes.
Nevertheless, the previous result has also revealed that the QFI vanishes as $s\to0$ because the rank of the quantum state does not change around $s=0$ in the first order of $s$ even if we assume $N_n=0$ for $\hat{b}_\pm$ modes.

\emph{Noisy detectors.---}
As pointed out in Ref.~\cite{lupo2020subwavelength}, analyzing QFI might not be appropriate for considering the effect of dark counts because QFI is a measurement-independent quantity while dark counts are a feature of the measurement device.
To analyze the impact of dark counts, we employ CFI $F(\theta)$ for an unknown parameter $\theta$, the inverse of which gives a lower bound of estimation error for a given measurement apparatus $\Delta^2\theta \geq 1/MF(\theta)$ \cite{rao1992information, kay1993fundamentals, cramer1999mathematical, van2004detection}.
By introducing the following proposition, we show that excitation noise on detectors makes the CFI vanish.
\begin{proposition}
Consider a quantum state that satisfies $\partial_\theta\hat{\rho}\approx \theta\hat{\sigma}$ for small $\theta$ with a Hermitian operator $\hat{\sigma}$ and a positive-operator-valued-measurement (POVM) $\{\hat{\Pi}_k\}_{k\in K}$ satisfying $\hat{\Pi}_k\geq 0$, $\sum_{k \in K}\hat{\Pi}_k=\mathbb{1}$.
Here, $K$ is an index set of measurement outcomes.
If the support of the probability distribution $p_k=\text{Tr}[\hat{\rho}(\theta)\hat{\Pi}_k]$, $\{k\in K|p_k>0\}$ does not change as $\theta\to0$, the CFI converges to zero as $\theta\to0$.
\end{proposition}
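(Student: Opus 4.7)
The plan is to mirror the proof of Proposition~1, replacing the symmetric-logarithmic-derivative calculation with the standard classical Fisher information formula. Writing the outcome probabilities as $p_k(t) = \text{Tr}[\hat{\rho}(t)\hat{\Pi}_k]$, the classical Fisher information reads
\[
F(t) = \sum_{k:\, p_k(t)>0} \frac{(\partial_t p_k(t))^2}{p_k(t)},
\]
and it suffices to show that each surviving summand is $O(t^2)$ with a $k$-dependent prefactor that is $t$-independent and summable.

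First I would expand the numerator using the hypothesis $\partial_t \hat{\rho}(t) \approx it\hat{\sigma}$. Linearity of the trace yields $\partial_t p_k(t) \approx it\,\text{Tr}[\hat{\sigma}\hat{\Pi}_k]$, so that $(\partial_t p_k(t))^2 = t^2\,|\text{Tr}[\hat{\sigma}\hat{\Pi}_k]|^2 + O(t^3)$, with a coefficient that does not depend on $t$. (As in Proposition~1, reality of $\partial_t p_k$ is preserved because $\text{Tr}[\hat{\sigma}\hat{\Pi}_k]$ is imaginary, so that its product with $it$ is real; this is exactly parallel to why the SLD came out Hermitian there.)

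Next I would use the support hypothesis. If $\{k:p_k(t)>0\}$ is the same set $S$ for all sufficiently small $t$, then for every $k\in S$ we have $p_k(0)>0$, and by continuity $p_k(t)$ is bounded below by $p_k(0)/2>0$ on some neighborhood of $t=0$. Combining this uniform lower bound on the denominators with the $O(t^2)$ bound on the numerators shows that each summand is at most $C_k t^2$ for a $t$-independent constant $C_k$, and hence $F(t)\to 0$ as $t\to 0$.

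The main obstacle I anticipate is the analytic justification for POVMs with infinitely many (or a continuum of) outcomes, where the limit $t\to 0$ must be exchanged with the sum or integral over $k$; a dominated-convergence argument, with a dominating function controlled by the operator norm of $\hat{\sigma}$ together with $\sum_k \hat{\Pi}_k = \mathbb{1}$, should handle this. Conceptually, the support condition plays precisely the role of the rank-preservation condition in Proposition~1: any outcome $k$ with $p_k(0)=0$ but $p_k(t)>0$ for small $t\neq 0$ would produce an indeterminate $(\partial_t p_k)^2/p_k$ that need not vanish, which is exactly why this hypothesis cannot be dropped.
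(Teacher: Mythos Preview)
Your proposal is correct and follows essentially the same approach as the paper: write $F(t)=\sum_{p_k>0}(\partial_t p_k)^2/p_k$, use $\partial_t p_k\approx it\,\text{Tr}[\hat{\Pi}_k\hat{\sigma}]$ to make each numerator $O(t^2)$, and invoke the support hypothesis to keep the denominators bounded away from zero. The paper's proof is terser and omits the continuity and dominated-convergence remarks you add, but the argument is otherwise identical.
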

\begin{proof}
Recall that the CFI of probability distribution $\{p_k\}$ is given by \cite{rao1992information, kay1993fundamentals, cramer1999mathematical, van2004detection}
\begin{align}
F(\theta)=\sum_{p_k>0} \frac{1}{p_k}\left(\frac{\partial p_k}{\partial \theta}\right)^2.
\end{align}
The probability of obtaining outcome $k$ by measuring a quantum state $\hat{\rho}(\theta)$ with POVM $\{\hat{\Pi}_k\}_{k\in K}$ and its derivative with respect to $\theta$ are given by
\begin{align}
p_k=\text{Tr}[\hat{\Pi}_k \hat{\rho}(\theta)]~~~ \text{and} ~~~ \frac{\partial p_k}{\partial \theta}\approx \theta\text{Tr}[\hat{\Pi}_k \hat{\sigma}].
\end{align}
Therefore, the CFI of small $\theta$ is written as
\begin{align}
F(\theta)=\sum_{p_k>0} \frac{1}{p_k}\left(\frac{\partial p_k}{\partial \theta}\right)^2\approx \theta^2 \sum_{p_k>0}\frac{1}{p_k}\left(\text{Tr}[\hat{\Pi}_k \hat{\sigma}]\right)^2.
\end{align}
Similar to the QFI, the CFI converges to zero as $\theta\to0$ unless there exists $p_k$ such that $p_k\to0$ \cite{finite}.
\end{proof}
Proposition 2 can be understood similarly to Proposition 1.
The proposition provides a necessary condition to prevent the CFI of a measurement setting from vanishing for a small separation $s$.
For example, dark counts are a kind of excitation noise on detectors that causes false excitations on all relevant detectors.
Dark count rates are generally nonzero in all relevant detectors in practice; thus, the support of the probability distribution does not change, and it is natural to expect that the CFI of separation $s$ vanishes $F(s)\to0$ as $s\to 0$ in experiment.
Moreover, the proposition can be applied to measurement crosstalk, which may arise for fin-SPADE scheme \cite{gessner2020superresolution}.
It makes all measurement outcomes mixed so that eventually the probability of obtaining each outcome becomes nonzero.
Also, the proposition shows the limitation of direct imaging, homodyne detection, and heterodyne detection \cite{yang2017fisher} which always give a nonzero probability of each outcome for generic PSFs even in the noiseless case.
As a final remark, Proposition 2 does not imply the failure of superresolution; it suggests that excitation noise on detectors can pose a limit on the resolution as for QFI in the previous section.

\begin{figure}[t!]
\centering
\includegraphics[width=0.35\textwidth]{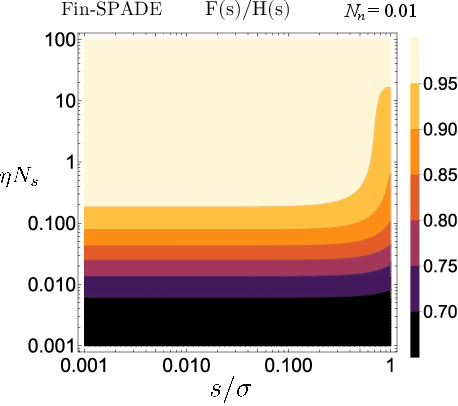}
\caption{Relative CFI of fin-SPADE to QFI with respect to different separation $s$ and mean signal photon number $\eta N_s$.}
\label{cfi}
\end{figure}

\emph{Finite spatial mode demultiplexing.---}
Finally, we analyze an achievable resolution using the fin-SPADE measurement.
In the noiseless case, the fin-SPADE scheme employs a photon counting for each Hermite-Gaussian mode $h_q(x)$ on the image plane, which is optimal if enough Hermite-Gaussian modes are accessible in experiment \cite{tsang2016quantum, lupo2016ultimate}.
In general, the analytical expression of the CFI of the fin-SPADE scheme is difficult to obtain due to the statistical correlations between different modes of the measurement.
We thus obtain the lower bound of the CFI using an inequality $F(\theta)\geq \dot{\vec{\mu}}^\text{T}C^{-1}\dot{\vec{\mu}}$, where $\vec{\mu}$ and $C$ denote the mean and covariance matrix of the outcome distribution, and $\dot{\vec{\mu}}\equiv\partial_s \vec{\mu}$ \cite{stein2014lower}.
We provide more details on the CFI and the numerical method in Appendix \ref{appendix:CFI}.
We consider a finite number of Hermite-Gaussian modes $h_q$ with $0\leq q\leq Q-1$ with $Q=15$ in the presence of thermal noise in the problem of resolving two incoherent thermal sources. 
We confirmed that increasing $Q$ larger than $15$ does not change the CFI for $10^{-3}\leq s/\sigma \leq 1$.
Figure \ref{cfi} shows the ratio of the lower bound of the CFI of fin-SPADE to the QFI (See Appendix \ref{appendix:CFI}).
It clearly shows that for a large number of signal photons $\eta N_s$, the ratio converges to the unity, which indicates that the fin-SPADE measurement is optimal in that regime.
Even when $\eta N_s$ is small, the lower bound of the CFI gives at least $65\%$ of the QFI.
Hence, the fin-SPADE method's performance is not degraded significantly by thermal noise compared to the QFI.
A particular way to improve this further is to directly measure the incoming photon numbers onto the symmetric and antisymmetric modes and their derivative modes $\{\hat{a}_\pm,\hat{b}_\pm \}$ (See Appendix \ref{appendix:QFI}).
In general, the implementation of such a measurement requires a prior information, which might be overcome by using the adaptive method \cite{barndorff2000fisher}.

\emph{Conclusions and discussion.---}
In this Letter, we have investigated the effect of noise on the resolution of two identical sources with an arbitrary state using quantum and classical Fisher information and shown that the Fisher information converges to zero if the system suffers from false excitation noise such as thermal noise or dark counts.
We have shown that in the problem of resolving two incoherent thermal sources with the number of signal photons being larger than that of noise photons, a signal-to-noise ratio poses a fundamental limit.
Finally, we have shown that the finite spatial demultiplexing measurement is nearly optimal for a large signal-to-noise ratio.

Throughout the Letter, we are assuming that two objects are identical.
Thus, the same conclusion might not hold if the sources are not identical \cite{vrehavcek2017multiparameter, vrehavcek2018optimal, bonsma2019realistic, prasad2020quantum}.
It would be interesting to analyze the problem of resolving nonidentical sources.

We acknowledge useful discussions with Cosmo Lupo.
We acknowledge support from the ARL-CDQI (W911NF-15-2-0067), ARO (W911NF-18-1-0020, W911NF-18-1-0212), ARO MURI (W911NF-16-1-0349), AFOSR MURI (FA9550-15-1-0015, FA9550-19-1-0399), NSF (EFMA-1640959, OMA-1936118, EEC-1941583), NTT Research, and the Packard Foundation (2013-39273).

\appendix
\begin{widetext}
\section{Dynamics of imaging process}\label{appendix:hamiltonian}
In this section, we provide the details of the imaging process and show that the quantum state of imaging process can be linearized by the separation $s$ when $s\ll1$.
We follow the method introduced in Ref. \cite{lupo2016ultimate}.
As introduced in the main text, two identical sources with separation $s>0$ emit light that excites modes characterized by $\hat{c}_{1,2}^\dagger$, and the emitted light is attenuated and distorted when it arrives at the image plane such that
\begin{align}
\hat{a}_{1,2}^\dagger=\sqrt{\eta}\hat{c}_{1,2}^\dagger+\sqrt{1-\eta}\hat{v}_{1,2}^\dagger,
\end{align}
where $\hat{v}_{1,2}$ represent the environmental mode operators interacting with $\hat{c}_{1,2}$, respectively.
Introducing the symmetric and antisymmetric mode operators, 
\begin{align}
\hat{a}_\pm=\frac{\hat{a}_1\pm\hat{a}_2}{\sqrt{2(1\pm\delta(s))}}, ~~~~\hat{c}_\pm=\frac{\hat{c}_1\pm\hat{c}_2}{\sqrt{2}},
\end{align}
and inverting Eq.~(3) in the main text,
we write
\begin{align}
\hat{a}_\pm=\sqrt{\eta_\pm}\hat{c}_\pm+\sqrt{1-\eta_\pm}\hat{v}_\pm=e^{i\hat{H}_\pm\theta_\pm}\hat{c}_\pm e^{-i\hat{H}_\pm\theta_\pm},
\end{align}
with $\eta_\pm\equiv \eta(1\pm\delta)$, $\hat{H}_\pm=i(\hat{c}_\pm^\dagger \hat{v}_\pm-\hat{v}_\pm^\dagger \hat{c}_\pm)$, and $\theta_\pm\equiv \arccos\sqrt{\eta_\pm}$.
Thus, when the separation is $s$, the quantum state on the image plane is written as 
\begin{align}\label{s_eq}
\hat{\rho}(s)=\text{Tr}_{u_\pm}\left[e^{-i(\hat{H}_+\theta_++\hat{H}_-\theta_-)}\left(\hat{\rho}_{c_+c_-}\otimes \hat{\sigma}_{v_+v_-}\right)e^{i(\hat{H}_+\theta_++\hat{H}_-\theta_-)}\right],
\end{align}
where $\hat{\rho}_{c_+c_-}$ represents the quantum state of light emitted by the sources, and $\hat{\sigma}_{v_+v_-}$ represents the quantum state of the environment. 
When $s$ infinitesimally changes, the quantum state can be written as
\begin{align}\label{ds_eq}
\hat{\rho}(s+ds)=\text{Tr}_{u_\pm}\left[e^{-i(\hat{H}_+\tilde{\theta}_++\hat{H}_-\tilde{\theta}_-)}\left(\hat{\rho}_{c_+c_-}\otimes \hat{\sigma}_{v_+v_-}\right)e^{i(\hat{H}_+\tilde{\theta}_++\hat{H}_-\tilde{\theta}_-)}\right]
\end{align}
Here, $\tilde{\theta}=\theta(s+ds)$. Notice that the quantum state is written in $\hat{a}_\pm(s+ds)$ modes.
In order to write the quantum state in terms of $\hat{a}_\pm(s)$ modes as Eq.~\eqref{s_eq}, we describe the dynamics of the mode operators $\hat{a}_\pm$.
Using the Heisenberg equation of motion, we obtain
\begin{align}\label{h_eq}
\frac{d\hat{a}_\pm}{ds}=i\frac{d\theta_\pm}{ds}[\hat{H}_\pm,\hat{a}_\pm]+\frac{\partial \hat{a}_\pm}{\partial s}=i\frac{d\theta_\pm}{ds}[\hat{H}_\pm,\hat{a}_\pm]-\frac{\epsilon_\pm}{2\sqrt{1\pm\delta}}\hat{b}_\pm=i\left[\frac{d\theta_\pm}{ds}\hat{H}_\pm+i\frac{\epsilon_\pm}{2\sqrt{1\pm\delta}}(\hat{a}_\pm\hat{b}_\pm^\dagger-\hat{a}_\pm^\dagger\hat{b}_\pm),\hat{a}_\pm\right]\equiv i[\hat{H}_\pm^\text{eff},\hat{a}_\pm].
\end{align}

So far, we have reproduced Lemma 1 in Ref. \cite{lupo2016ultimate}. From now on, we analyze the dynamics of the system and show that the derivative of the quantum state with respect to the separation $s$ is linearized in $s$ for a small $s$ limit to apply proposition 1 in the main text.
Defining $\hat{\gamma}\equiv e^{-i(\hat{H}_+\theta_++\hat{H}_-\theta_-)}\left(\hat{\rho}_{c_+c_-}\otimes\hat{\sigma}_{v_+v_-}\right)e^{i(\hat{H}_+\theta_++\hat{H}_-\theta_-)}\otimes|0\rangle\langle 0|_{b_+}\otimes|0\rangle\langle 0|_{b_-}$ and $B_\pm\equiv -{\epsilon_\pm}/(2\sqrt{1\pm\delta})$ and using the equation of motion Eq.~\eqref{h_eq} and Eq.~\eqref{s_eq}, Eq.~\eqref{ds_eq} can be equivalently written as, 
\begin{align}
\hat{\rho}(s+ds)&\approx\text{Tr}_{u_\pm}\left[e^{-i(\hat{H}^\text{eff}_++\hat{H}^\text{eff}_-)ds}\hat{\gamma}e^{i(\hat{H}^\text{eff}_++\hat{H}^\text{eff}_-)ds}\right] \nonumber \\ 
&\approx e^{-ds[B_+(\hat{a}_+\hat{b}^\dagger_+-\hat{a}^\dagger_+\hat{b}_+)+B_-(\hat{a}_-\hat{b}^\dagger_--\hat{a}^\dagger_-\hat{b}_-)]}\text{Tr}_{u_\pm}[e^{-i(\hat{H}_+\tilde{\theta}_++\hat{H}_-\tilde{\theta}_-)}\left(\hat{\rho}_{c_+c_-}\otimes\hat{\sigma}_{v_+v_-}\otimes|0\rangle\langle 0|_{b_+}\otimes|0\rangle\langle 0|_{b_-}\right)e^{i(\hat{H}_+\tilde{\theta}_++\hat{H}_-\tilde{\theta}_-)}] \nonumber \\
&~~~~~\times e^{ds[B_+(\hat{a}_+\hat{b}^\dagger_+-\hat{a}^\dagger_+\hat{b}_+)+B_-(\hat{a}_-\hat{b}^\dagger_--\hat{a}^\dagger_-\hat{b}_-)]} \nonumber \\
&\approx \left[1-ds[B_+(\hat{a}_+\hat{b}^\dagger_+-\hat{a}^\dagger_+\hat{b}_+)+B_-(\hat{a}_-\hat{b}^\dagger_--\hat{a}^\dagger_-\hat{b}_-)]\right] \nonumber\\
&~~~~~\times\text{Tr}_{u_\pm}[(1-ids(\hat{H}_+\partial_s\theta_++\hat{H}_-\partial_s\theta_-))\hat{\gamma}(1+ids(\hat{H}_+\partial_s\theta_++\hat{H}_-\partial_s\theta_-))] \nonumber \\ 
&~~~~~\times\left[1+ds[B_+(\hat{a}_+\hat{b}^\dagger_+-\hat{a}^\dagger_+\hat{b}_+)+B_-(\hat{a}_-\hat{b}^\dagger_--\hat{a}^\dagger_-\hat{b}_-)]\right] \nonumber \\ 
&=\hat{\rho}(s)-ds[B_+(\hat{a}_+\hat{b}^\dagger_+-\hat{a}^\dagger_+\hat{b}_+)+B_-(\hat{a}_-\hat{b}^\dagger_--\hat{a}^\dagger_-\hat{b}_-),\hat{\rho}(s)]-ids\text{Tr}_{u_\pm}\left([\hat{H}_+\partial_s\theta_++\hat{H}_-\partial_s\theta_-,\hat{\gamma}]\right).
\end{align}
Thus, the derivative of the quantum state is written as
\begin{align}\label{deriv_rho}
\frac{d\hat{\rho}(s)}{ds}&=-[B_+(\hat{a}_+\hat{b}^\dagger_+-\hat{a}^\dagger_+\hat{b}_+)+B_-(\hat{a}_-\hat{b}^\dagger_--\hat{a}^\dagger_-\hat{b}_-),\hat{\rho}(s)]-i\text{Tr}_{u_\pm}\left(\left[\hat{H}_+\partial_s\theta_++\hat{H}_-\partial_s\theta_-,\hat{\gamma}\right]\right).
\end{align}
Now, let us consider a regime where the separation $s$ is small.
For small $s$, we can approximate
\begin{align}
B_+&\approx -\frac{1}{4}\sqrt{(\delta^{(4)}(0)-\delta''(0)^2)}s+O(s^2)\propto \alpha_+ s+O(s^2), \\
B_-&\approx -\sqrt{\frac{1}{12\delta''(0)}\left(\frac{\delta^{(6)}(0)}{5}-\frac{\delta^{(4)}(0)^2}{3\delta''(0)}\right)}s+O(s^2)\propto \alpha_- s+O(s^2),
\end{align}
where $\delta^{(n)}(0)\equiv\partial^n\delta(s)/\partial s^n|_{s=0}$.
Thus, the first commutator in Eq.~\eqref{deriv_rho} is linearized in $s$ for small $s$.
Now, let us focus on the second term.
Let us assume that $\hat{\sigma}_{v_+v_-}=\hat{\sigma}_{v_+}\otimes \hat{\sigma}_{v_-}$ which is a natural choice as a quantum state for environment. Note that the quantum state is written in $\hat{a}_\pm(s)$ modes.
For small $s$, noting that
\begin{align}
\frac{d\theta_+}{ds}&\approx -\frac{\eta \delta'(s)}{2\sqrt{\eta(1+\delta(s))}\sqrt{1-\eta(1+\delta(s))}}\approx -\frac{\sqrt{\eta} \delta''(s)s}{\sqrt{8(1-2\eta)}}+\mathcal{O}(s^2)\\
\frac{d\theta_-}{ds}&\approx \frac{\eta \delta'(s)}{2\sqrt{\eta(1-\delta(s))}\sqrt{1-\eta(1-\delta(s))}}\approx \frac{\sqrt{\eta} \delta'(s)}{2\sqrt{1-\delta(s)}}\approx -\sqrt{-\frac{\eta \delta''(0)}{2}}+\mathcal{O}(s),
\end{align}
we have
\begin{align}
&\text{Tr}_{v_\pm}\left([\hat{H}_+\partial_s\theta_+,\hat{\gamma}]\right)\propto s+O(s^2).
\end{align}

On the other hand, we can expand the remaining term in $s$ around $s=0$ as
\begin{align}
&\text{Tr}_{u_\pm}\left([\hat{H}_-\partial_s\theta_-,e^{-i(\hat{H}_+\theta_++\hat{H}_-\theta_-)}\left(\hat{\rho}_{c_+c_-}\otimes\hat{\sigma}_{v_+}\otimes\hat{\sigma}_{v_-}\right)e^{i(\hat{H}_+\theta_++\hat{H}_-\theta_-)}]\right) \nonumber \\
=&\frac{d\theta_-}{ds}\text{Tr}_{u_+}\left(e^{-i\hat{H}_+\theta_+}\left(\text{Tr}_{u_-}\left[[\hat{H}_-,e^{-i\hat{H}_-\theta_-}\left(\hat{\rho}_{c_+c_-}\otimes\hat{\sigma}_{v_-}\right)e^{i\hat{H}_-\theta_-}]\right]\otimes\hat{\sigma}_{v_+}\right)e^{i\hat{H}_+\theta_+}\right) \nonumber \\
\approx&\frac{d\theta_-}{ds}\text{Tr}_{u_+}\left(e^{-i\hat{H}_+\theta_+}\left(\text{Tr}_{u_-}\left[[\hat{H}_-,(1-is\hat{H}_-\partial_s\theta_-)e^{-i\pi\hat{H}_-/2}\left(\hat{\rho}_{c_+c_-}\otimes\hat{\sigma}_{v_-}\right)e^{i\pi\hat{H}_-/2}(1+is\hat{H}_-\partial_s\theta_-)]\right]\otimes\hat{\sigma}_{v_+}\right)e^{i\hat{H}_+\theta_+}\right) \nonumber \\
\approx&\frac{d\theta_-}{ds}\text{Tr}_{u_+}\left(e^{-i\hat{H}_+\theta_+}\left(\text{Tr}_{u_-}\left[[\hat{H}_-,(1-is\hat{H}_-\partial_s\theta_-)\left(\hat{\rho}_{c_+v_-}\otimes\hat{\sigma}_{c_-}\right)(1+is\hat{H}_-\partial_s\theta_-)]\right]\otimes\hat{\sigma}_{v_+}\right)e^{i\hat{H}_+\theta_+}\right) \nonumber \\
\approx&\frac{d\theta_-}{ds}\text{Tr}_{u_+}\left(e^{-i\hat{H}_+\theta_+}\left(\text{Tr}_{u_-}\left[[\hat{H}_-,\hat{\rho}_{c_+v_-}\otimes\hat{\sigma}_{c_-}]\right]\otimes\hat{\sigma}_{v_+}\right)e^{i\hat{H}_+\theta_+}\right)+\mathcal{O}(s).
\end{align}
We have used the fact that $\theta_-\to \pi/2$ as $s\to0$ to expand the unitary operator $e^{-i\hat{H}_-\theta_-}$.
Thus, the zeroth order of $s$ becomes zero if $\text{Tr}_{u_-}\left([\hat{H}_-,\hat{\rho}_{c_+u_-}\otimes\hat{\sigma}_{c_-}]\right)=0$.
The condition becomes
\begin{align}
&\text{Tr}_{u_-}\left([\hat{H}_-,\hat{\rho}_{c_+u_-}\otimes\hat{\sigma}_{c_-}]\right)=i\text{Tr}_{u_-}\left([\hat{c}_-^\dagger \hat{v}_--\hat{v}^\dagger_-\hat{c}_-,\hat{\rho}_{c_+u_-}\otimes\hat{\sigma}_{c_-}]\right) \nonumber \\ 
=&i\text{Tr}_{u_-}(\hat{\rho}_{c_+v_-}\hat{v}_-)[\hat{c}_-^\dagger,\hat{\sigma}_{c_-}]+i\text{Tr}_{u_-}(\hat{\rho}_{c_+v_-}\hat{v}_-^\dagger)[\hat{\sigma}_{c_-}^\dagger,\hat{c}_-]=0.
\end{align}
Thus, this condition is satisfied if $\text{Tr}_{u_-}(\hat{\rho}_{c_+v_-}\hat{v}_-)=0$.
Since we assume two identical objects, the quantum state $\hat{\rho}_{c_+c_-}$ satisfies 
\begin{align}
\text{Tr}_{c_-}[\hat{\rho}_{c_+c_-}\hat{c}_-]=\text{Tr}_{c_-}[\hat{\rho}_{c_+c_-}\frac{\hat{c}_1-\hat{c}_2}{\sqrt{2}}]=\text{Tr}_{c_-}[\hat{\rho}_{c_+c_-}\frac{\hat{c}_2-\hat{c}_1}{\sqrt{2}}]. 
\end{align}
Thus, $\text{Tr}_{u_-}[\hat{\rho}_{c_+v_-}\hat{v}_-]=0$.
Hence, we have shown that $\partial\hat{\rho}/\partial s\propto s+O(s^2)$.

\section{Quantum Fisher information for two incoherent thermal sources}\label{appendix:QFI}
In this section, we derive quantum Fisher information of the separation of two identical thermal sources and obtain the optimal measurement corresponding to the quantum Fisher information.
Quantum Fisher information of $n$-mode Gaussian states is well-known, which is given by \cite{pinel2013quantum, jiang2014quantum, vsafranek2015quantum, vsafranek2016optimal, nichols2018multiparameter, vsafranek2018estimation, oh2019optimal}
\begin{align}
H(s)=-\text{Tr}[G \frac{\partial V(s)}{\partial s}],
\end{align}
where $V(s)$ is the $2n\times 2n$ covariance matrix of the Gaussian state $\hat{\rho}$, $V_{ij}=\text{Tr}[\hat{\rho}\{\hat{Q}_i-\langle\hat{Q}_i\rangle,\hat{Q}_j-\langle\hat{Q}_j\rangle\}]/2$, $\hat{Q}=(\hat{x}_1,\hat{p}_1,\cdots, \hat{x}_n, \hat{p}_n)^\text{T}$ and $\Omega$ a skew symmetric matrix giving the canonical commutation relation, 
\begin{align}
\Omega=\mathbb{1}_{n}\otimes \omega, ~~~ \omega=
\begin{pmatrix}
0 & 1 \\
-1 & 0
\end{pmatrix},
\end{align}
and $G$ is a $2n \times 2n$ real symmetric matrix satisfying
\begin{align}\label{G_eq}
4V(s)GV(s)+\Omega G \Omega+2\frac{\partial V(s)}{\partial s}=0.
\end{align}
Here, $\mathbb{1}_n$ denotes the $n \times n$ identity matrix.

When two incoherent sources of a distance $s$ are in thermal states with a same temperature characterized by the mean photon number $N_s$, the quantum state can be written as a product form of states in modes $\hat{c}_\pm$, $\hat{\rho}_\text{T}(N_s)\otimes \hat{\rho}_\text{T}(N_s)$.
When the light arrived at the image plane, the quantum state is described in symmetric and antisymmetric modes as, 
\begin{align}
\hat{\rho}(s)&=\text{Tr}_{v_+v_-}\left[e^{-i(\hat{H}_+\theta_++\hat{H}_-\theta_-)}\left[\hat{\rho}_\text{T}(N_s)\otimes \hat{\rho}_\text{T}(N_s)\otimes |0\rangle\langle0|\otimes |0\rangle\langle 0|\right]_{c_+c_-v_+v_-}e^{i(\hat{H}_+\theta_++\hat{H}_-\theta_-)}\otimes |0\rangle\langle0|_{b_+}\otimes |0\rangle\langle 0|_{b_-} \right] \nonumber \\ 
&=[\hat{\rho}_\text{T}(\eta_+N_s)\otimes \hat{\rho}_\text{T}(\eta_-N_s)]_{a_+a_-}\otimes |0\rangle\langle0|_{b_+}\otimes |0\rangle\langle 0|_{b_-} \nonumber \\ 
&=[\hat{\rho}_\text{T}(\eta N_s(1+\delta))\otimes \hat{\rho}_\text{T}(\eta N_s(1-\delta))]_{a_+a_-}\otimes |0\rangle\langle0|_{b_+}\otimes |0\rangle\langle 0|_{b_-}.
\end{align}
Let us introduce a thermal noise assuming that the thermal photon number of the noise is the same on the relevant modes. Thus, the thermal photon numbers on each mode increase as
\begin{align}
\hat{\rho}(s)=[\hat{\rho}_\text{T}(\eta N_s(1+\delta)+N_n)\otimes \hat{\rho}_\text{T}(\eta N_s(1-\delta)+N_n)]_{a_+a_-}\otimes[\hat{\rho}_\text{T}(N_n)\otimes \hat{\rho}_\text{T}(N_n)]_{b_+b_-}.
\end{align}

Let us first focus on the symmetric modes $\hat{a}_+,\hat{b}_+$.
For infinitesimal change of $s$, 
The quantum state in the symmetric modes can be written as
\begin{align}
\hat{\rho}_\pm(s)&=\hat{\rho}_\text{T}(\eta N_s(1\pm\delta)+N_n)\otimes \hat{\rho}_\text{T}(N_n).
\end{align}

The quantum state with an infinitesimal change $ds$ of $s$ is given by
\begin{align}
\hat{\rho}_+(s+ds)&\approx\text{Tr}_{u_+u_-}\left[e^{-i\hat{H}^\text{eff}_+ds}e^{-i\hat{H}_+\theta_+}\left[\hat{\rho}_\text{T}(N_s)\otimes |0\rangle\langle 0|\right]_{c_+v_+}e^{i\hat{H}_+\theta_+}e^{i\hat{H}^\text{eff}_+ds}\right] \nonumber \\ 
&\approx e^{-B_+ds(\hat{a}^\dagger_+\hat{b}_+-\hat{a}_+\hat{b}_+^\dagger)}\left[\hat{\rho}_\text{T}(\tilde{\eta}_+N_s)\otimes |0\rangle\langle0|\right]_{a_+b_+}e^{B_+ds(\hat{a}^\dagger_+\hat{b}_+-\hat{a}_+\hat{b}_+^\dagger)},
\end{align}
where $\tilde{\eta}=\eta[1+\delta(s+ds)]$.
Again, introducing the thermal noise, the state becomes
\begin{align}
\hat{\rho}_+(s+ds)\approx e^{-B_+ds(\hat{a}_+\hat{b}^\dagger_+-\hat{a}^\dagger_+\hat{b}_+)}\left[\hat{\rho}_\text{T}(\eta N_s[1+\delta(s+ds)]+N_n)\otimes \hat{\rho}_\text{T}(N_n)\right]_{a_+b_+}e^{B_+ds(\hat{a}_+\hat{b}^\dagger_+-\hat{a}^\dagger_+\hat{b}_+)}.
\end{align}

Thus, the covariance matrix of the symmetric modes can be written as
\begin{align}
V_+(s+ds)&=SV_+(s)S^\text{T}=
\begin{pmatrix}
\mu_+^2v_1+(1-\mu_+^2)v_2 & -\mu_+\sqrt{1-\mu_+^2}(v_2-v_1) \\ 
-\mu_+\sqrt{1-\mu_+^2}(v_2-v_1) & \mu_+^2v_2+(1-\mu_+^2)v_1
\end{pmatrix}
\otimes  \mathbb{1}_2,
\\ 
~~~ V_+(s)&=\text{diag}(v_1,v_1,v_2,v_2),~~~ S= 
\begin{pmatrix}
\mu_+ & \sqrt{1-\mu_+^2} \\
-\sqrt{1-\mu_+^2} & \mu_+
\end{pmatrix}
\otimes\mathbb{1}
, ~~~ \mu_+=\cos B_+ds.
\end{align}
Here, the first (second) row and column of the first matrix represents the mode $\hat{a}_+$ ($\hat{b}_+$), $v_1=(1+\delta(s+ds))\eta N_s+N_n+1/2$, and $v_2=N_n+1/2$, and $\mu_+$ transmittance of the beam splitter unitary operator.
Noting that 
\begin{align}
\mu_+\simeq 1-\frac{1}{2} B_+(s)^2 ds^2=1+ds^2\left(\frac{\delta''(0)-\delta''(s)}{8(1+\delta)}+\frac{\delta'(s)^2}{8(1+\delta)^2}\right),
\end{align}
the derivative of the covariance matrix with respect to $s$ is written as
\begin{align}
\frac{\partial V_+(s)}{\partial s}=\left[-(v_2-v_1)\sqrt{-\frac{\partial^2 \mu_+}{\partial s^2}}\sigma_x +\eta N_s\delta'(s)|0\rangle\langle 0|\right] \otimes \mathbb{1},
\end{align}
where $\sigma_x$ is the Pauli-$x$ matrix.
One can readily find the solution of Eq.~\eqref{G_eq} for $G$ which is given by
\begin{align}
G=
\begin{pmatrix}
g_{11} & g_{12} \\
g_{21} & g_{22}
\end{pmatrix}
\otimes \mathbb{1}
\end{align}
with
\begin{align}
g_{11}&=\frac{-2\eta N_s\delta'(s)}{4v_1^2-1}, ~~ g_{12}=g_{21}=\frac{-2(v_2-v_1)}{4v_1v_2-1}\sqrt{-\frac{\partial^2\mu_+}{\partial s^2}} \\ 
g_{22}&=0 \text{ if } v_2>1/2, ~~~ g_{22} \text{ is arbitrary if } v_2=1/2.
\end{align}
Thus,
\begin{align}
H_+(s)=2\left[\frac{2\eta^2 N_s^2\delta'(s)^2}{4v_1^2-1}+\frac{4(v_2-v_1)^2}{4v_1v_2-1}\left(-\frac{\partial^2\mu_+}{\partial s^2}\right)\right]
\end{align}
After some simplification of the expression, we obtain the quantum Fisher information from the symmetric mode, which is given by
\begin{align}
H_+(s)=\frac{\eta^2 N_s^2\delta'(s)^2}{(\eta N_s(1+\delta)+N_n+1)(\eta N_s(1+\delta)+N_n)}-\frac{2\eta^2N_s^2[(1+\delta)(\delta''(0)-\delta''(s))+\delta'(s)^2]}{(2N_n+1)(2\eta N_s(1+\delta)+2N_n+1)-1}.
\end{align}
Similarly, one can easily find that
\begin{align}
H_-(s)&=\frac{\eta^2 N_s^2\delta'(s)^2}{(\eta N_s(1-\delta)+N_n+1)(\eta N_s(1-\delta)+N_n)}-\frac{2\eta^2 N_s^2[(1-\delta)(\delta''(0)+\delta''(s))+\delta'(s)^2]}{(2N_n+1)(2\eta N_s(1-\delta)+2N_n+1)-1}.
\end{align}

Let us find the optimal measurement that gives the classical Fisher information equal to quantum Fisher information.
The optimal measurement can be found by diagonalizing the matrix $G$ \cite{oh2019optimal}.
Let us first consider the symmetric mode. The matrix $G_+$ can be diagonalized as
\begin{align}
G_+=
\begin{pmatrix}
g_{11} & g_{12} \\
g_{21} & g_{22}
\end{pmatrix}
\otimes \mathbb{1}_2
=O_+^\text{T}
\begin{pmatrix}
g_1 & 0 \\
0 & g_2
\end{pmatrix}
O_+ \otimes \mathbb{1}_2, ~~~~ \text{where}~~~
O_+=
\begin{pmatrix}
\cos \theta & \sin \theta \\
-\sin \theta & \cos \theta
\end{pmatrix}.
\end{align}
Thus, $G_+$ can be decoupled into two-mode by a beam splitter corresponding to the symplectic matrix $O_+\otimes \mathbb{1}_2$.
To be more specific, the beam splitter angle $\theta$ with the transmittance and reflectance  being $\cos\theta$ and $\sin\theta$ is given by $\theta=1/2\tan^{-1}(2g_{12}/g_{11})$.
Similarly, $G_-$ for anti-symmetric modes can also be decoupled by a beam splitter represented by $O_-\otimes \mathbb{1}_2$, which can be obtained in the same way.

Note that the symmetric logarithmic derivative operator for Gaussian states can be written as \cite{oh2019optimal}
\begin{align}
\hat{L}\propto\hat{Q}^\text{T}G \hat{Q}
\end{align}
with $\hat{Q}=(\hat{x}_1,\hat{p}_1,\hat{x}_2,\hat{p}_2,\hat{x}_3,\hat{p}_3,\hat{x}_4,\hat{p}_4)$. Here, each quadrature operator corresponds to the mode $\hat{a}_+, \hat{b}_+, \hat{a}_-$, and $\hat{b}_-$.
In this case,
\begin{align}
\hat{L}\propto\hat{Q}^\text{T}G \hat{Q}=(O\hat{Q})^\text{T}(\text{diag}(g_1,g_2,g_3,g_4)\otimes \mathbb{1}_2) (O\hat{Q})\propto g_1\hat{n}'_1+g_2\hat{n}'_2+g_3\hat{n}'_3+g_4\hat{n}'_4,
\end{align}
where $\hat{Q}'=O\hat{Q}$, $\hat{n}_i=(\hat{x}_i^2+\hat{p}_i^2-1)/2$, and
\begin{align}
O=(O_+\otimes \mathbb{1}_2)\oplus(O_-\otimes \mathbb{1}_2).
\end{align}
Thus, the photon-number resolving detection after the beam splitters for each two-mode is optimal.

\section{Lower bound of classical Fisher information of fin-SPADE}\label{appendix:CFI}
We calculate the lower bound of classical Fisher information of fin-SPADE method with thermal noise, following the procedure employed in Ref. \cite{nair2016far}.
The difference from Ref. \cite{nair2016far} is the presence of thermal noise.
Let us recall that the lower bound of classical Fisher information for an unknown parameter $\theta$ is given by $F(\theta)\geq \dot{\vec{\mu}}^\text{T}C^{-1}\dot{\vec{\mu}}$,
where $\vec{\mu}$ is the mean vector of the measurement outcome, and $C$ is the covariance matrix of the outcome.
Thus, in the section, we find the mean and the covariance matrix of the measurement outcome from fin-SPADE.

We assume the Gaussian point spread function,
\begin{align}
\psi(x)=\frac{1}{(2\pi\sigma^2)^{1/4}}\exp\left[-\frac{x^2}{4\sigma^2}\right].
\end{align}
Let $h_q$ be a Hermite-Gaussian spatial mode ($q$ is a non-negative integer),
\begin{align}
h_q(x)=\left(\frac{1}{2\pi\sigma^2}\right)^{1/4}\frac{1}{\sqrt{2^q q!}}H_q\left(\frac{x}{\sqrt{2}\sigma}\right)\exp\left(-\frac{x^2}{4\sigma^2}\right).
\end{align}
The quantum state of light in thermal states on the image plane can be written as
\begin{align}
\hat{\rho}=\int d^2 A_1 d^2A_2 p_{N_s}(A)|\psi_{A,s}\rangle\langle \psi_{A,s}|
\end{align}
where
\begin{align}
p_{N_s}(A)=\left(\frac{1}{\pi \eta N_s}\right)^2\exp\left(-\frac{|A_1|^2+|A_2|^2}{\eta N_s}\right)
\end{align}
is the probability density of the source field amplitudes $A=(A_1,A_2)$, and the conditional state $|\psi_{A,s}\rangle$ represents a coherent state with an amplitude
\begin{align}
\psi_{A}(x)=A_1\psi(x-s/2)+A_2\psi(x+s/2).
\end{align}
When thermal noise occurs, the quantum state conditioned on $A$ is changed to
\begin{align}
\psi_{A,\xi}(x)=A_1\psi(x-s/2)+A_2\psi(x+s/2)+\xi(x),
\end{align}
where $\xi(x)$ is a random variable satisfying $\langle\xi(x)\rangle=0$, and $\langle\xi^*(x_1)\xi(x_2)\rangle=N_n\delta(x_1-x_2)$ which describes a random Gaussian displacement noise.
Conditioned on $A$, the amplitude in the $q$-mode can be written as
\begin{align}
B_{q|A,\xi}&=\int_{-\infty}^\infty dx h^*_q(x)\psi_{A,\xi}(x)=\int_{-\infty}^\infty dx h^*_q(x)[A_1\psi(x-s/2)+A_2\psi(x+s/2)+\xi(x)] \\
&=R_q\exp(-Q/2)\frac{Q^{q/2}}{\sqrt{q!}}+\int_{-\infty}^\infty dxh_q^*(x)\xi(x).
\end{align}
where
\begin{align}
\int_{-\infty}^{\infty} dx h^*_q(x)\psi(x+s/2)=(-1)^q\int_{-\infty}^{\infty} dx h^*_q(x)\psi(x-s/2)=(-1)^q\exp(-Q/2)\frac{Q^{q/2}}{\sqrt{q!}}
\end{align}
with $Q=s^2/16\sigma^2$ and $R_q=A_1+A_2$ when $q$ is even, $R_q=A_1-A_2$ otherwise.
Thus, the photocounts $N_{q|A,\xi}$ in each mode are the independent Poisson random variable with the mean
\begin{align}
\mu_{q|A,\xi}=|R_q|^2f_q+\sqrt{f_q}\left(R_q \int_{-\infty}^{\infty} dx h_q^*(x)\xi^*(x)+R_q^* \int_{-\infty}^{\infty} dx h_q(x)\xi(x)\right)+\int_{-\infty}^{\infty} dx_1dx_2h^*(x_1)h(x_2)\xi^*(x_1)\xi(x_2),
\end{align}
where $f_q=\exp(-Q)\frac{Q^{q}}{q!}$,
and the unconditional photocurrent on each mode is written as
\begin{align}
\mu_q=\langle |B_{q|A,\xi}|^2 \rangle_{A,\xi}=2\eta N_s f_q+N_n.
\end{align}
Thus, the derivative of the mean photocurrent is given by
\begin{align}
\frac{\partial \mu_q}{\partial s}=\frac{\eta N_s s}{4\sigma^2}(f_{q-1}-f_q),
\end{align}
with $f_{-1}\equiv 0$.
For the second moments, for $q=q'$, we obtain
\begin{align}
\mathbb{E}[N_q^2]&=\langle \mathbb{E}[N_{q|A,\xi}^2]\rangle_{A,\xi}=\langle \mu_{q|A,\xi}^2+\mu_{q|A,\xi}\rangle_{A,\xi}\\
&=\langle|R_q|^4f_q^2+4|R_q|^2f_q \int_{-\infty}^{\infty} dx_1dx_2h_q^*(x_1)h_q(x_2)\xi^*(x_1)\xi(x_2)+\left(\int_{-\infty}^{\infty} dx_1dx_2h_q^*(x_1)h_q(x_2)\xi^*(x_1)\xi(x_2)\right)^2\rangle_{A,\xi}+\mu_q\\
&=8\eta^2 N_s^2f_q^2 +8\eta N_sN_n f_q+2N_n^2+2\eta N_s f_q+N_n.
\end{align}
When $q\neq q'$ and $q-q'$ is even, we get
\begin{align}
\mathbb{E}[N_q N_q']&=\langle \mathbb{E}[N_{q|A,\xi}N_{q'|A,\xi}]\rangle_{A,\xi}=\langle |B_{q|A,\xi}|^2|B_{q'|A,\xi}|^2\rangle_{A,\xi}\\
&=\langle |R_q|^4f_qf_q'+|R_q|^2f_q \int_{-\infty}^{\infty} dx_1dx_2 h_{q'}^*(x_1) h_{q'}(x_2)\xi^*(x_1)\xi(x_2)+|R_{q'}|^2f_{q'} \int_{-\infty}^{\infty} dx_1dx_2 h_{q}^*(x_1) h_{q}(x_2)\xi^*(x_1)\xi(x_2)\\ 
&+\int_{-\infty}^{\infty} dx_1dx_2dx_3dx_4h_q^*(x_1)h_q(x_2) h_{q'}^*(x_3) h_{q'}(x_4)\xi^*(x_1)\xi(x_2)\xi^*(x_3)\xi(x_4)\rangle_{A,\xi}\\
&=8\eta^2 N_s^2f_qf_{q'} +2\eta N_sN_n (f_q+f_{q'})+N_n^2.
\end{align}
Finally, when $q\neq q'$ and $q-q'$ is odd, we obtain
\begin{align}
\mathbb{E}[N_q N_q']&=\langle \mathbb{E}[N_{q|A,\xi}N_{q'|A,\xi}]\rangle_{A,\xi}=\langle \mu_{q|A,\xi}\mu_{q'|A,\xi}\rangle_{A,\xi}\\
&=\langle |R_q|^2|R_{q'}|^2f_qf_q'+|R_q|^2f_q \int_{-\infty}^{\infty} dx_1dx_2 h_{q'}^*(x_1) h_{q'}(x_2)\xi^*(x_1)\xi(x_2)+|R_{q'}|^2f_{q'} \int_{-\infty}^{\infty} dx_1dx_2 h_{q}^*(x_1) h_{q}(x_2)\xi^*(x_1)\xi(x_2)\\ 
&+\int_{-\infty}^{\infty} dx_1dx_2dx_3dx_4h_q^*(x_1)h_q(x_2)h_{q'}^*(x_3)h_{q'}(x_4)\xi^*(x_1)\xi(x_2)\xi^*(x_3)\xi(x_4)\rangle_{A,\xi}\\
&=4\eta^2 N_s^2f_qf_{q'} +2\eta N_sN_n (f_q+f_{q'})+N_n^2.
\end{align}

Thus, the covariance matrix is written as
\begin{align}
C_{qq'}=
\begin{cases} 
      4\eta^2 N_s^2f_q^2+4\eta N_sN_nf_q+2\eta N_sf_q+N_n^2+N_n & q=q' \\
      4\eta^2 N_s^2f_qf_{q'} & q\neq q' \text{ and } q-q' \text{ is even} \\
      0 & q\neq q' \text{ and } q-q' \text{ is odd}
   \end{cases}
\end{align}

The covariance matrix and the derivative of the first moment give the lower bound of classical Fisher information as stated in the main text.

\end{widetext}

\iffalse

\fi

\bibliography{reference.bib}

\end{document}